\newtheorem{lemma}{Lemma}
\newtheorem{remark}{Remark}
\begin{document}
\title{Codebook-based Port Selection and Combining for CSI-Free Uplink Fluid Antenna Multiple Access}
\author{Chenguang Rao, 
            Kai-Kit Wong,~\IEEEmembership{Fellow,~IEEE}, 
            Sai Xu,~\IEEEmembership{Member,~IEEE},\\
            Xusheng Zhu,
            Yangyang Zhang, and 
            Chan-Byoung Chae,~\IEEEmembership{Fellow,~IEEE}
\vspace{-8mm}

\thanks{The work of C. Rao, K. K. Wong, S. Xu and X. Zhu was supported by the Engineering and Physical Sciences Research Council (EPSRC) under grant EP/W026813/1.} 
\thanks{The work of C.-B. Chae was in part supported by the Institute for Information and Communication Technology Planning and Evaluation (IITP)/National Research Foundation of Korea (NRF) grant funded by the Ministry of Science and Information and Communications Technology (MSIT), South Korea, under Grant RS-2024-00428780 and 2022R1A5A1027646.}

\thanks{C. Rao, K. K. Wong, S. Xu and X. Zhu are with the Department of Electronic and Electrical Engineering, University College London, London, United Kingdom. K. K. Wong is also affiliated with the Yonsei Frontier Lab, Yonsei University, Seoul 03722 South Korea (e-mail: $\rm \{chenguang.rao,kai\text{-}kit.wong,sai.xu,xusheng.zhu\}@ucl.ac.uk$).}
\thanks{Y. Zhang is with Kuang-Chi Science Limited, Hong Kong SAR, China (e-mail: $\rm yangyang.zhang@kuang\text{-}chi.org$).}
\thanks{C.-B. Chae is with the School of Integrated Technology, Yonsei University, Seoul 03722, South Korea (e-mail: cbchae@yonsei.ac.kr).}

\thanks{Corresponding author: Kai-Kit Wong.}
}
\maketitle

\begin{abstract}
Fluid antenna multiple access (FAMA) has recently emerged as a simple, promising scheme for large-scale multiuser connectivity, offering strong scalability with low implementation complexity. Nevertheless, most existing FAMA studies focus on downlink transmission under perfect channel state information (CSI) at the receiver side, while the uplink counterpart remains largely unexplored. This paper proposes a novel codebook-based port selection and combining (CPSC) FAMA framework for the uplink communications without CSI at the base station (BS). In the proposed scheme, a predefined codebook is designed and broadcast by the BS. Each user equipment (UE) employs a fluid antenna, acquires its local CSI and independently chooses the most suitable codeword, activates the corresponding fluid antenna ports, and determines the combining weights to achieve a two-way match between the selected codeword and the instantaneous effective channel. The BS then separates the superimposed user signals through codebook-guided projection operations without requiring global CSI or multiuser joint optimization. To handle potential codeword collisions, three lightweight scheduling strategies are introduced, offering flexible trade-offs between signaling overhead and collision avoidance. Simulation results demonstrate that the proposed CPSC-FAMA approach achieves substantially higher rates than fixed-antenna systems while maintaining low complexity. Moreover, the results confirm that amortizing the optimization cost over the UEs effectively reduces the BS processing burden and enhances scalability, making the proposed scheme a strong candidate for future sixth-generation (6G) networks.
\end{abstract}

\begin{IEEEkeywords}
Fluid antenna multiple access (FAMA), fluid antenna system (FAS), codebook, massive connectivity.
\end{IEEEkeywords}

\vspace{-2mm}
\section{Introduction}
\subsection{Context}
\IEEEPARstart{E}{volving} beyond fifth-generation (B5G) and into sixth-generation (6G) wireless technologies, the three famous use cases namely enhanced mobile broadband (eMBB), ultra-reliable low-latency communication (URLLC), and massive machine-type communication (mMTC) are becoming the new six use scenarios \cite{6G1,6G2,6G4}, as specified by the International Mobile Telecommunications-2030 framework (IMT-2030) \cite{ITUwhite}. Out of the six use scenarios, massive connectivity represents a classical wireless communication problem that has bothered the industry for decades. Massive connectivity seeks to share the same channel use over many user equipments (UEs), which is greatly motivated by the push for higher spectral efficiency and the massive internet-of-things (IoT) scenarios \cite{Shafique-2020,Guo-2021}.\footnote{In this paper, the terms `UE' and `IoT device' mean the same thing and will be used interchangeably depending on the context.}

State-of-the-art relies on multiuser multiple-input multiple-output (MIMO) precoding in the downlink \cite{Villalonga-2022} and interference rejection combining (IRC) in the uplink \cite{Wassie-2015}. While having a large antenna array at the base station (BS) is a default setting and will continue to be so in next-generation cellular networks \cite{Wang-xlmimo,MMIMO2}, serious issues regarding their scalability cannot be neglected. Hardware complexity aside, one issue is the need for accurate channel state information (CSI) at the BS to work properly. This will become unbearable if the number of UEs is extremely large and even more so if there is no infrastructure for massive IoT communication. In the latter, the IoT devices may not even be synchronized, let alone processed jointly with their CSI at the BS or gateway when they report their data to the BS/gateway. Recent directions on holographic and dynamic metasurface antennas can reduce the power consumption for large-scale MIMO systems but do little in reducing the signal processing burden for multiple access \cite{HMA,DMA,tri1,tri2}.

The past few years have also witnessed increasing interest in non-orthogonal multiple access (NOMA) \cite{Saito-2013,Dai-2015} and rate-splitting multiple access (RSMA) \cite{RSMA}, which many advocate to be a massive connectivity solution. However, both NOMA and RSMA would require perfect CSI at both ends to operate reliably because of the use of interference cancellation. Also, in practice, heavy channel coding would be required to ensure correct operation of the interference cancellation, minimizing error propagation, but this would compromise the capacity.

\vspace{-3mm}
\subsection{Scalable Multiple Access}
Evidently, there needs to be a rethnking of massive multiple access---one that should be light in CSI and signal processing requirements at the BS for scalability. One emerging scheme toward that direction is fluid antenna multiple access (FAMA) \cite{Rabie-2024,Wong-2022}. FAMA exploits antenna position reconfigurability to mitigate interference entirely without the need of precoding nor interference cancellation. FAMA is an application of the fluid antenna system (FAS) concept to multiple access.

FAS represents the paradigm that treats antenna as a reconfigurable physical-layer resource, broadening system design and optimization \cite{FAS1,FAS2,FAS3,FAS4,Lu-2025,New-2026jsac,Wong-2026wc}. FAS is hardware agnostic and can appear in numerous forms, such as liquid-based antennas \cite{shen2024design,Shamim-2025}, metamaterials \cite{Liu-2025arxiv,Zhang-jsac2026}, pixel-based structures \cite{zhang2024pixel,tong-2025pixel} and others. Since the seminal work in \cite{wong2021fluid,wong2020pel}, many efforts have been made to understand the diversity benefits of FAS, see e.g., \cite{Khammassi2023,espinosa2024anew,New2023fluid,new2023information}. CSI estimation in FAS has also been investigated in recent attempts using a range of techniques \cite{xu2023channel,new2025channel,zhang2025successive}.

One of the most important innovations in FAS is arguably FAMA. The position reconfigurability of FAS can provide UE the ability to scan through the ups and downs of the received signal with fine resolution in the spatial domain, and because of that, the UE can take advantage of the spatial opportunity where the interference signal suffers from a deep fade due to multipath fading, naturally without relying on precoding nor interference cancellation. The benefit is that the burden of the BS for CSI acquisition and complex precoding optimization is completely eliminated in the downlink. There have been good progress in developing this FAMA idea. For instance, \cite{Wong-2022} and \cite{FAMA3} proposed the fast FAMA technique that switches the FAS port on a per-symbol basis to receive the signal where the sum-interference cancels by itself. Later, \cite{FAMA1,FAMA2} devised the slow FAMA scheme which tunes the FAS port by maximizing the average signal-to-interference plus noise ratio (SINR) and hence is much more practical. Learning-based methods were also proposed to realize slow FAMA with reduced CSI \cite{FAMAN1}. In addition, slow FAMA has also been shown to be effective in channel-coded multi-carrier systems \cite{EFAMA1}. Further advances in slow FAMA were made in \cite{CUMA,PCUMA,FAMAN3} by choosing many ports and combining their signals in the analog domain, illustrating a considerable gain in multiplexing. The work in \cite{EFAMA2} further investigated FAMA systems in a cell-free setting. Most recently, \cite{EFAMA4} addressed the optimal design of antenna configuration and power control in FAMA networks while \cite{FAMAN2} applied FAMA in time-varying ionospheric channels.

\vspace{-6mm}
\subsection{Research Gap}
Despite the steadily growing body of work on FAMA, most results to date focus on the downlink, with only a handful of studies addressing the uplink case. For instance, \cite{UFAMA1} focused on minimizing the total transmit power in an uplink FAS by optimizing the BS antenna positions. Later, \cite{UFAMA2} considered an uplink multiple-access channel in which each UE is equipped with a fluid antenna array and the BS employs fixed-position antennas, and the paper approximated the system's channel capacity by jointly optimizing the data rates for all users. After that, \cite{UFAMA3} analyzed a FAMA-aided wireless-powered system, where UEs employ fluid antennas for both energy harvesting and uplink transmission. In \cite{UFAMA4}, FAS was applied to assist a NOMA uplink where interference cancellation was imperfect. A deep reinforcement-learning-based optimization scheme was proposed to jointly adjust the BS beamforming, users' power allocation, and antenna positions, achieving improved sum-rate performance under residual interference. Finally, in \cite{UFAMA5}, a two-timescale uplink MIMO-FAS transmission framework was presented under imperfect CSI, where antenna positions were optimized using long-term statistical CSI while beamforming vectors were adapted to instantaneous CSI. 

Overall, it seems customary to consider the availability of CSI at the BS, which despite a standard assumption, does little in improving scalability of massive multiple access for future wireless networks. In particular, we are interested in:
\begin{quote}
\begin{center}
{\em Would it be possible to handle massive multiple access in the uplink without CSI at the BS?}
\end{center}
\end{quote}

This is greatly motivated by the situations where reliable CSI acquisition becomes infeasible, e.g., in infrastructure-less massive IoT applications, or even in cellular settings where CSI is fast changing and too many UEs are involved.

Without CSI at the BS, IRC such as minimum mean square error (MMSE), zero-forcing (ZF) and advanced receivers such as maximum-likelihood detector (MLD) or interference cancellation, are no longer possible. If there are methods to ensure that the BS resolves the inter-user interference with a massive number of UEs without CSI, this could fundamentally address the scalability problem for massive connectivity.

\vspace{-3mm}
\subsection{Our Objective}
This paper proposes a novel uplink transmission framework, named codebook-based port selection and combining (CPSC) FAMA, which utilizes a codebook-based communication and decoding scheme to enable CSI-free operation at the BS. The BS is equipped with multiple fixed antennas while each UE has a FAS. In the proposed scheme, each UE is required to have its local CSI, and {\em independently} selects a codeword, activates the corresponding ports, and determines its combining weights based on its own channel condition. At the BS, the superimposed received signal can be efficiently separated into individual UE components through simple projection operations, without the need for CSI estimation. Furthermore, unlike existing FAMA approaches that depend on joint optimization among users, all optimization in the proposed framework is performed locally at each UE.\footnote{In the downlink, FAMA schemes typically do not require joint optimization among UEs \cite{Wong-2022,FAMA1,EFAMA1,CUMA,PCUMA,FAMAN3} but in the uplink, when multiple fluid antennas are deployed at the BS, this will require joint optimization of port selection and signal combining over all the UEs at the BS.} Each UE requires only a single RF chain, which also means significant hardware savings. This decentralized structure effectively distributes computational and signaling loads from the BS to the UEs, enabling higher scalability as the number of UEs increases. As a consequence, the proposed approach exhibits strong potential to support large-scale multiuser access in future wireless networks. Our main contributions are summarized as follows: 
\begin{itemize}
\item A novel CPSC-FAMA framework is proposed, in which a predefined codebook is configured and broadcast by the BS, from which each UE selects its optimal codeword according to its local CSI. With the reconfigurability of FAS, each UE activates appropriate ports and determines the weighting vector to achieve a two-way match between the selected codeword and its instantaneous CSI. A low-complexity and efficient algorithm is designed to realize this local optimization process. At the BS, the received composite signal is processed using the predefined codebook, enabling the separation of multiple UEs' messages through simple projection operations without requiring CSI. Furthermore, a simplified version without UE-side combining is also investigated, where corresponding algorithms are developed to reduce hardware cost.
\item To handle the situations when different UEs choose the same codeword, three scheduling strategies are proposed. Specifically, a random deferral scheme allows collided UEs to retransmit in subsequent slots, a user-reselection scheme lets them autonomously choose alternative codewords within the current slot. Besides, a BS-reassignment scheme allocates unoccupied codewords directly. These strategies provide flexible trade-offs between signaling overhead, latency, and collision-avoidance efficiency.
\item Simulation results verify the effectiveness of the proposed framework. The impacts of port selection, codeword optimization, combining, and collision-handling strategies are comprehensively evaluated. The results demonstrate that the proposed CPSC-FAMA scheme achieves superior average rate and scalability compared with conventional fixed-antenna benchmarks, while maintaining low complexity and signaling overhead. Moreover, the simulation results also provide valuable insights into the trade-offs between performance, complexity, and system load, thus confirming the feasibility of the proposed scheme for large-scale 6G random-access scenarios.
\end{itemize}

The remainder of this paper is organized as follows. Section~\ref{sec:model} introduces the system model and the basic framework of CPSC-FAMA. Section \ref{sec:codeword} provides an effective strategy for UEs to choose an appropriate codeword from the codebook. Section~\ref{sec:Sb} proposes a low-complexity algorithm for selecting the optimal ports and the combining vector. Additionally, an algorithm that optimizes only the port selection while keeping the combining vector fixed is developed to keep hardware resource requirements low. Then Section~\ref{sec:collision} introduces three scheduling schemes to address the codeword collision issues. Section \ref{sec:sim} then discusses simulation results, and Section \ref{sec:conclu} concludes the paper.  The main symbols used throughout this paper are summarized as follows: $x$, $\mathbf{x}$, $\mathbf{X}$ represent scalar, vector, and matrix, respectively. $\lVert\mathbf{x}\rVert$ represents the magnitude of $\mathbf{x}$. $(\cdot)^{T}$, $(\cdot)^{H}$ denote the transpose and Hermitian transpose, respectively. $\Re\{\cdot\}$ represents the real part, and $j_{0}(\cdot)$ denotes the zeroth-order Bessel function of the first kind.

\begin{figure}
\centering
\includegraphics[width=1\linewidth]{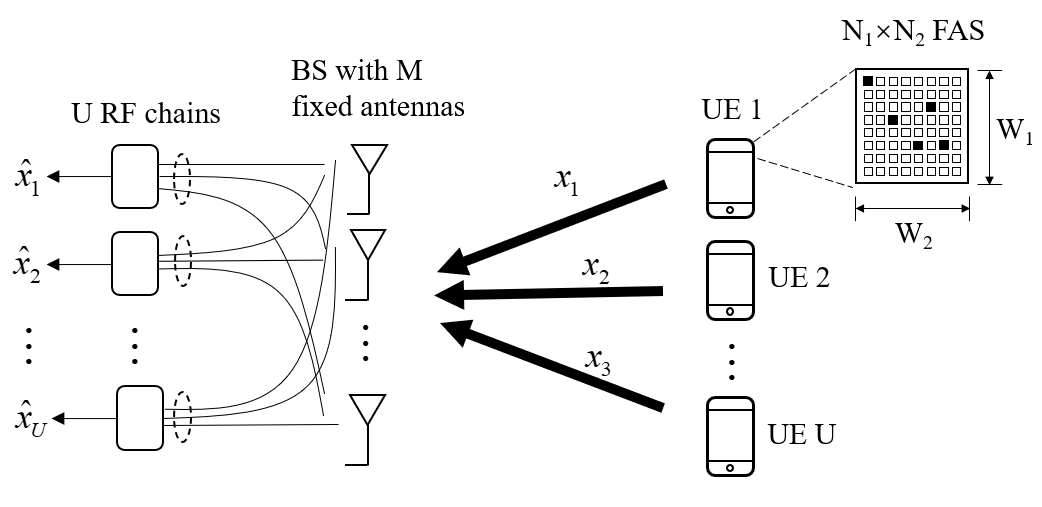}
\vspace{-4mm}
\caption{An uplink communication system, where the BS is equipped with \(M\) fixed-antenna, and the UEs are equipped with fluid antennas.}\label{fig:model}
\vspace{-2mm}
\end{figure}

\vspace{-2mm}

\section{System Model and\\The Codebook-based Framework}\label{sec:model}
\subsection{System Model}
Consider an uplink communication system with an \(M\)-fixed antenna BS and \(U\) UEs, \(M\geq U\), as shown in Fig.~\ref{fig:model}. Each UE is equipped with a 2D fluid antenna array having \(N = N_1\times N_2\) discrete ports. The size of FAS is \(W = W_1\lambda\times W_2\lambda\), where \(\lambda\) represents the carrier wavelength. Let \(\mathbf{H}_u\in\mathbb{C}^{M\times N}\) denote the uplink channel from the \(N\) ports of UE \(u\) to the BS, which can be expressed as
\begin{equation}\label{eq:ch}
\mathbf{H}_u=\big[\mathbf{h}_{u,1},\dots,\mathbf{h}_{u,N}\big]\in\mathbb{C}^{M\times N},
\end{equation}
where \(\mathbf{h}_{u,n}=[h^{(u)}_{1,n} ~h^{(u)}_{2,n}\cdots ~h^{(u)}_{M,n}]^T\in\mathbb{C}^{M}\) represents the channel vector from the \(n\)-th FAS port of UE \(u\) to the BS, with $h^{(u)}_{m,n}$ being the complex channel coefficient between the $m$-th BS antenna and the $n$-th port of UE $u$. As in \cite{New2023fluid,new2023information}, we have used the conversion $(n_1,n_2)\mapsto n=k_{n_1,n_2}$ so that the ports are ordered in one dimension. In particular,
\begin{equation}\label{eq:2d_2_1d}
k_{n_1,n_2} = n_2+(n_1-1)N_2.
\end{equation}

Under a Rician fading model, \(\mathbf{h}_{u,n}\), can be expressed as
\begin{equation}
\mathbf{h}_{u,n} = \sqrt{\frac{L}{1+L}}\bar{\mathbf{h}}_{u,n} + \sqrt{\frac{1}{1+L}}\tilde{\mathbf{h}}_{u,n}.
\end{equation}
in which \(L\) is the Rice factor, {\color{black}\(\bar{\mathbf{h}}_{u,n}\)} and \(\tilde{\mathbf{h}}_{u,n}\) represent the line-of-sight (LoS) path gain and non-LoS (NLoS) path gain, respectively. Under  rich scattering, \([\tilde{\mathbf{h}}_{u,n}]_m\) is a zero-mean complex Gaussian random variable, whose spatial covariance over any two ports is given by \(\mathbb{E}\{h^{(u)}_{m,k_1}\,(h^{(u)}_{m,k_2})^*\}\), so that
\begin{multline}
J_{k_1,k_2} = \\ 
\Omega j_0\left(2\pi\sqrt{\left(\frac{n_1-\widetilde{n}_1}{N_1-1}W_1\right)^2+\left(\frac{n_2-\widetilde{n}_2}{N_2-1}W_2\right)^2}\right),
\end{multline}
where \(\Omega = \mathbb{E}\{|h_{m,u}^{(k)}|^2\}\) measures the channel power, $k_1=k_{n_1,n_2}$ and $k_2=k_{\widetilde{n}_1,\widetilde{n}_2}$ which can be obtained by (\ref{eq:2d_2_1d}).

During the transmitting phase, each UE activates \(K_u\) multiple ports. Define the port activation matrix \(\mathbf{A}_u\in\{0,1\}^{N\times K_u}\) for UE \(u\), which is given by
\begin{equation}
\mathbf{A}_u = \left[\mathbf{a}_{u,1},\mathbf{a}_{u,2},\dots,\mathbf{a}_{u,K_u}\right].
\end{equation}
Each column of \(\mathbf{A}_u\) is a standard basis vector, i.e., \(\mathbf{a}_{u,m}\in\{\mathbf{e}_1,\dots,\mathbf{e}_{N}\}\). Also, for all \(1\leq i\neq j \leq K_u\), we have \(\mathbf{a}_{u,i}\neq\mathbf{a}_{u,j}\). Then the effective uplink channel is \(\boldsymbol{h}_u\triangleq\mathbf{H}_u\mathbf{A}_u\mathbf{b}_u\in\mathbb{C}^{M}\), where \(\mathbf{b}_u\in\mathbb{R}^{K_u}\) is the combining vector. It is assumed that the per-UE combining is power-normalized with \(\|\mathbf{b}_u\|_2=1\).\footnote{The hardware implication of having the combining vector at the UE will be discussed later. But it is worth pointing out that the combining vector is real-valued, meaning that there are no phase shifters involved but simple attentuation of the received signals from the FAS ports. We will also consider the case where $\mathbf{b}_u$ is fixed and such scaling is avoided in Section \ref{subsec:ncb}.} Let \(x_u\) be the transmitted symbol of UE \(u\) with \(\mathbb{E}\{|x_u|^2\}=P_u\), and \(\mathbf{n}\sim\mathcal{CN}(\mathbf{0},\sigma^2\mathbf{I}_M)\) be the additive white Gaussian noise. Each UE sends a message to the BS simultaneously. Then the received signal at BS can be expressed as
\begin{equation}\label{eq:yBS}
\mathbf{y}=\sum_{u=1}^{U}\boldsymbol{h}_u x_u+\mathbf{n}=\mathbf{H}_{\mathrm{eff}}\mathbf{x}+\mathbf{n},
\end{equation}
where \(\mathbf{H}_{\mathrm{eff}}=[\boldsymbol{h}_1,\dots,\boldsymbol{h}_U]\in\mathbb{C}^{M\times U}\) and \(\mathbf{x}=[x_1,\ldots,x_U]^T\). After all the UEs have transmitted, the BS collects the received signals with \(U\) RF chains and extracts the data \(\hat{x}_1\sim\hat{x}_U\) as the estimates of the original messages \(x_1\sim x_U\) separately. 

\begin{figure}
\centering
\includegraphics[width=1\linewidth]{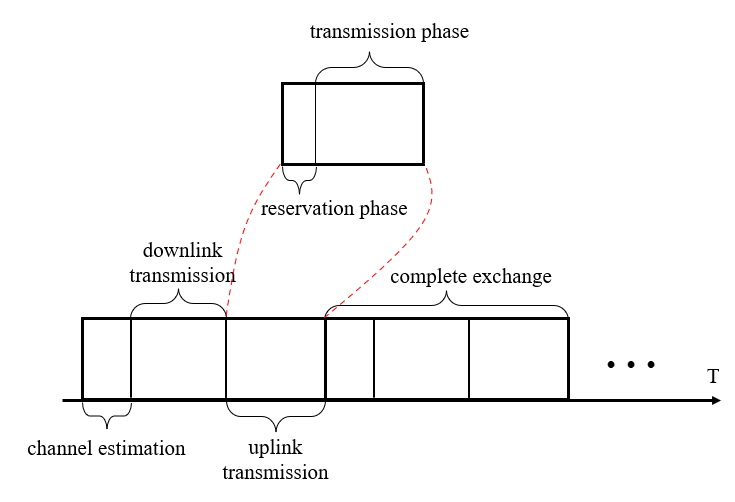}
\vspace{-4mm}
\caption{A complete frame, including a channel estimation slot, a downlink transmission slot, and an uplink transmission slot. The uplink slot consists of a reservation phase and a transmission phase.}\label{fig:time_model}
\vspace{-3mm}
\end{figure}

In this work, we assume that each UE has the knowledge of its own channel \(\mathbf{H}_u\), while the BS has no CSI. Consider a complete exchange that consists of a BS downlink transmission followed by the UE's uplink reply within one coherence block, as illustrated in Fig.~\ref{fig:time_model}. Also, assume the channel does not change within one block, i.e., the channel can be treated as static during this exchange. During the downlink, the UE acquires the downlink channel \(\mathbf H_u^{\mathrm{DL}}\) using techniques such as those in \cite{xu2023channel,new2025channel,zhang2025successive}. Under time-division-duplex (TDD) reciprocity with proper calibration, the uplink channel equals the transpose of the downlink one, i.e., \(\mathbf H_u = \big(\mathbf H_u^{\mathrm{DL}}\big)^T\). In the sequel, our design will be based on this assumption. 

\vspace{-4mm}
\subsection{A Codebook-Based Combining Scheme}
To extract a specific UE's message from the superimposed signal of all the UEs received, the BS needs to do so without CSI. To achieve this, a codebook-based scheme is employed. In particular, the BS first predefines a shared codebook that contains \(M\) orthogonal codewords. This codebook is known to all UEs. During uplink access, each UE independently selects one codeword from the codebook and configures its FAS ports to match the codeword. For the BS to identify which codeword is selected by each UE, each time slot is divided into two phases: a reservation phase and a transmission phase, as shown in Fig.~\ref{fig:time_model}. In the reservation phase, each UE selects a codeword from the predefined codebook and transmits a short reservation signal with the index of the codeword it chooses. At this stage, UEs can transmit messages on orthogonal resources, e.g., in the time domain. Since the signals to be transmitted are very short, this cost is acceptable. After scanning, the BS learns the codeword utilization status and can subsequently process the received signals in the transmission phase, where each UE transmits its data message using the selected codeword. 

We denote \(\mathbf Q=[\mathbf{q}_1,\ldots,\mathbf{q}_M]\in\mathbb C^{M\times M}\) as the codebook known to both sides. UE \(u\) selects the codeword \(\mathbf{q}_{k(u)}\). The BS then projects \(\mathbf{y}\) in (\ref{eq:yBS}) onto \(\mathbf{q}_{k(u)}\), and obtains
\begin{equation}\label{eq:zku}
\begin{aligned}
z_{k(u)} &=\mathbf{q}_{k(u)}^{H}\mathbf y \\ 
&= \underbrace{\mathbf{q}_{k(u)}^{H}\boldsymbol{h}_u x_u}_{\text{desired signal}} +  \underbrace{\sum_{v=1\atop v\neq u}^{U}\mathbf{q}_{k(u)}^{H}\boldsymbol{h}_v x_v}_{\text{interference}}+  \underbrace{\mathbf{q}_{k(u)}^{H}\mathbf{n}}_{\text{noise}}.
\end{aligned} 
\end{equation}
Here, we have assumed that each UE selects a different index. The case where multiple UEs select the same codeword will be discussed in Section \ref{sec:collision}. Ideally, the interference terms are zero, while the signal power remains undiminished. Therefore, when \(\mathbf{q}_{k(u)} = \boldsymbol{h}_u/\|\boldsymbol{h}_u\|_2\) for all \(1\leq u\leq U\), \ref{eq:zku} is written as
\begin{equation}\label{eq:zku2}
z_{k(u)} = |\boldsymbol{h}_u| x_u + \mathbf{q}_{k(u)}^{H}\mathbf{n}.
\end{equation}

Since the BS does not have any CSI, we may choose any unitary codebook (e.g., DFT). The FAS enables UE to adjust the channel \(\boldsymbol{h}_u\) to match a codeword from the codebook by changing the activated ports and setting the combining vector. This will be discussed in the next subsection.

\vspace{-2mm}
\subsection{Problem Formulation}\label{subsec:PF}
Recalling that \(\boldsymbol{h}_u=\mathbf{H}_u\mathbf{A}_u\mathbf{b}_u\), it follows that \(\boldsymbol{h}_u\) lies within the column space of \(\mathbf{H}_u\), \(\operatorname{span}\{\mathbf{H}_u\}\). Apparently, if \(\exists 1\leq m\leq M\), \(\operatorname{rank}\{\mathbf{H}_u\} = \operatorname{rank}\{\big[\mathbf{H}_u, \mathbf{q}_m\big]\}\), then \(\mathbf{q}_m\in\operatorname{span}\{\mathbf{H}_u\}\). In this case, it is optimal for UE \(u\) to select \(\mathbf{q}_m\) as the codeword. In the ideal case where \(\mathrm{rank}(\mathbf{H}_u)=M\), we have \(\mathrm{span}(\mathbf{H}_u)=\mathbb{C}^{M}\). Thus, any target beam \(\mathbf{q}_k\in\mathbb{C}^{M}\) can be matched exactly by activating \(M\) linearly independent ports. Concretely, pick any index set \(S\) with \(|S|=M\) such that \(\mathbf{H}_{u,S} = \mathbf{H}_{u}[:,S] \in \mathbb{C}^{M\times M}\) is full rank, activate those ports according to the index, and compute the combining vector directly by
\begin{equation}\label{eq:bu}
\mathbf{b}_{u,0} = \left(\mathbf{H}_{u,S}^{H}\mathbf{H}_{u,S}\right)^{-1}\mathbf{H}_{u,S}^{H}\mathbf{q}_k,~\mbox{with }
\mathbf{b}_u = \frac{\mathbf{b}_{u,0}}{\|\mathbf{b}_{u,0}\|_2}.
\end{equation}
This yields \(\mathbf{H}_{u}\mathbf{A}_u\mathbf{b}_u=\mathbf{H}_{u,S}\mathbf{b}_u = \alpha\mathbf{q}_k\), i.e., indicating a perfect alignment. However, when the number of users \(U\) is large, to lower the chance of multiple UEs picking the same codeword, one typically needs \(M\gg U\). As activating \(M\) ports per UE would then be burdensome on the UE hardware, this method is not suitable. As a result, in the remainder of this subsection, we consider a constrained design with a fixed and much smaller number of active ports \(K_u\ll M\).

The problem consists of three parts: codeword selection, port selection, and combining vector optimization, i.e.,
\begin{equation}\label{eq:p0}
P_0:\left\{\begin{aligned}
 \min_{\mathbf{b}_u,\mathbf{A}_u,k(u)} &~
\left\| \frac{\mathbf{H}_{u}\mathbf{A}_u\mathbf{b}_u}{\lVert \mathbf{H}_{u}\mathbf{A}_u\mathbf{b}_u\rVert_2}-\mathbf{q}_{k(u)}\right\|_2^2\\
\mbox{s.t.} &~\mathbf{b}_u\in\mathbb{R}^{K_u}, \left\| \mathbf{b}_u\right\|_{2}=1,\\
		&~\mathbf{A}_u = \left[\mathbf{a}_{u,1},\dots,\mathbf{a}_{u,K_u}\right]\in\{0,1\}^{N\times K_u},\\
		&~\mathbf{a}_{u,m}\in\{\mathbf{e}_1,\dots,\mathbf{e}_{N}\},\\
		&~1\leq k(u)\leq M.
\end{aligned}\right.
\end{equation}
The target is to make the normalized effective channel vector \(\boldsymbol{h}_u = \mathbf{H}_{u}\mathbf{A}_u\mathbf{b}_u\) as close as possible to the codeword vector \(\mathbf{q}_{k(u)}\). According to the expansion formula for the magnitude of subtraction of any two complex vectors \(\mathbf{x},\mathbf{y}\in\mathbb{C}^K\):
\begin{equation}
\left\|\mathbf{x}-\mathbf{y} \right\|_2^2 = \left\|\mathbf{x} \right\|_2^2 + \left\|\mathbf{y} \right\|_2^2 - 2\Re\{\mathbf{y}^H\mathbf{x}\},
\end{equation}
the objective function can be transformed to
\begin{equation}
\left\|\frac{\mathbf{H}_{u,S}\mathbf{b}_u}{\left\|\mathbf{H}_{u,S}\mathbf{b}_u\right\|_2}-\mathbf{q}_{k(u)}\right\|_2^2 
= 2-2\Re\left\{\mathbf{q}_{k(u)}^H\frac{\mathbf{H}_{u,S}\mathbf{b}_u}{\left\|\mathbf{H}_{u,S}\mathbf{b}_u\right\|_2}\right\}.
\end{equation}
For conciseness, we define the port selection set
\begin{equation}\label{eq:Sdf}
S = \{1\leq s_1 < s_2 < \dots < s_{K_u} \leq N\}.
\end{equation} 
Then by denoting \(\mathbf{H}_{u,S} = \mathbf{H}_{u}\mathbf{A}_u = \mathbf{H}_{u}[:,S]\), the problem can be reexpressed as 
\begin{equation}\label{eq:p1}
P_1:\left\{\begin{aligned}
\max_{\mathbf{b}_u,S,k(u)} &~
\Re\left\{\mathbf{q}_{k(u)}^H\frac{\mathbf{H}_{u,S}\mathbf{b}_u}{\left\|\mathbf{H}_{u,S}\mathbf{b}_u\right\|_2}\right\}\\
\mbox{s.t.} &~\mathbf{b}_u\in\mathbb{R}^{K_u}, \left\|\mathbf{b}_u\right\|_{2}=1\\
		&~\mathbf{H}_{u,S} = \mathbf{H}_{u}[:,S],\\
		&~S = \{1\leq s_1 < s_2 < \dots < s_{K_u} \leq N\},\\
		&~1\leq k(u)\leq M.
\end{aligned}\right.
\end{equation}
Problem $P_1$ can be better understood by dividing it into three parts with three optimization variables: 
\begin{itemize}
\item Finding the codeword index \(k(u)\).
\item Finding the port set \(S\) with a fixed \(k(u)\).
\item Finding the combining vector \(\mathbf{b}_u\) with fixed \(S\) and \(k(u)\).
\end{itemize}
These three parts will be handled separately next.

\vspace{-2mm}
\section{Selection of the Codeword Index}\label{sec:codeword}
To select the best codeword at UE $u$, a straightforward way is to exhaustively evaluate all candidates from the codebook \(\mathbf{Q}\) and for each codeword, compare the objective function after optimizing \(S\) and \(\mathbf{b}_u\) and pick the best one. However, in massive access scenarios, the number of available codewords will be extremely large, which leads to unacceptable complexity. Therefore, a more effective method needs to be sought.

Following the idea from \cite{PCUMA}, we adopt the approach with matrix projection. As mentioned in Section \ref{subsec:PF}, the ideal case is when one codeword lies in \(\operatorname{span}\{\mathbf{H}_u\}\). If such a codeword does not exist, a suboptimal solution is to select the codeword closest to this \(\operatorname{span}\{\mathbf{H}_u\}\). Intuitively, if a candidate \(\mathbf{q}_{k(u)}\) can be well represented by some \(K_u\)-column subspace of \(\mathbf{H}_u\), then \(\mathbf{q}_{k(u)}\) is likely to already have a significant component within the column space of \(\mathbf{H}_u\). Therefore, a suboptimal rule to select the codeword \(\mathbf{q}_{k(u)}\) can be achieved by
\begin{equation}\label{eq:kus0}
k(u)^{\star} \approx \operatorname*{arg\,max}_{\substack{1\leq k(u)\leq M}}\|\mathbf{P}_{\mathbf{H}_u}\mathbf{q}_{k(u)}\|_2^2,
\end{equation}
where \(\mathbf{P}_{\mathbf{H}_u} = \mathbf{H}_u(\mathbf{H}_u^H\mathbf{H}_u)^{-1}\mathbf{H}_u^H\) represents the orthogonal projector of \(\operatorname{span}\{\mathbf{H}_u\}\). However, using \eqref{eq:kus0} requires about \(O(NM^2+M^2N+N^3)\) operations to compute \(\mathbf{P}_{\mathbf{H}_u}\), which is still unacceptable. To simplify the problem, singular value decomposition (SVD) can be employed to obtain a solution with a low-rank approximation. Actually, the leading singular vectors of \(\mathbf{H}_u\) capture the directions where the columns of \(\mathbf{H}_u\) concentrate most of their energy, and the trailing singular directions are the weak components that cannot be reliably approximated by any small subset of columns. Therefore, by screening the candidate set \(\mathbf{Q}\) by their projection onto the top-\(t\) left singular vectors of \(\mathbf{H}_u\) (with \(t\) slightly larger than \(K_u\)), an approximated solution can be obtained. In particular, let \(\mathbf{H}_u=\mathbf{U}_u\mathbf{\Sigma}_u\mathbf{V}_u^H\) be the SVD of \(\mathbf{H}_u\), and denote by \(\mathbf{U}_{u,t}\in\mathbb{C}^{M\times t}\) the first \(t\) left singular vectors that correspond to the top-\(k\) largest singular values. The codeword \(\mathbf{q}_{k(u)}\) can then be selected according to the following rule:
\begin{equation}\label{eq:kus1}
k(u)^{\star} \approx \operatorname*{arg\,max}_{\substack{1\leq k(u)\leq M}}\|\mathbf{U}_{u,t}^H\mathbf{q}_{k(u)}\|_2^2.
\end{equation}

To obtain the first \(t\) left singular vectors of \(\mathbf{H}_u\), we apply Randomized SVD \cite{RSVD1,RSVD2}, whose complexity are only at \(O(MN\log t + Nt^2)\). Combined with the cost for projection, the total computational cost to obtain \(k(u)^{\star}\) according to \eqref{eq:kus1} is \(O(MN\log t + Nt^2) + O(M^2t)\). At this point, the complexity is mainly caused by the projection operation. By choosing an appropriate codebook, the complexity can be further reduced. In particular, when the codebook \(\mathbf{Q}\) is generated by discrete Fourier transform (DFT), we have
\begin{multline}
\mathbf{q}_m =\\
\left[1,e^{-j\frac{2\pi}{M}(m-1)},e^{-j\frac{2\pi}{M}2(m-1)},\dots, e^{-j\frac{2\pi}{M}(M-1)(m-1)}\right]^T.
\end{multline}
Therefore, the projection operation \(\mathbf{U}_{u,t}^H\mathbf{Q}\) can be regarded as applying an \(M\)-point DFT on each row of \(\mathbf{U}_{u,t}^H\), which can be computed effectively via fast Fourier transform (FFT). The complexity of projection with FFT is only \(O(tM\log M)\), which is much lower than \(O(M^2t)\).

\begin{remark}\label{rem:nq}
The choice of \(t\) can balance performance and complexity. A larger \(t\) would improve the performance while requiring more operations. Comparisons for different choices of \(t\) will be presented in Section \ref{sec:sim}.
\end{remark}

\vspace{-2mm}
\section{Port Selection And Combining Methods}\label{sec:Sb}
Once an appropriate codeword is selected, each UE needs to determine the set of activated ports \(S\) and the combining vector \(\mathbf{b}_u\). This section investigates the selection schemes for \( S \) and \(\mathbf{b}_u\). Specifically, \(\mathbf{b}_u\) can be derived in closed form if \(S\) is fixed. 
Also, we explore the selection strategy of \(S\) in cases where combining is not performed and \(\mathbf{b}_u = \frac{1}{\sqrt{K_u}}\mathbf{1}\).

Before we address the optimization, it is worth mentioning that the fact that \(\mathbf{b}_u\) is real-valued and its power is constrained to be one, means that each branch from a selected port only needs to apply passive attenuation rather than amplification before the signals from all the selected ports are superimposed. Therefore, the whole process can be realized on a single RF chain. Possible implementations include Rotman lenses \cite{Rotman-1963,Gherbi-2024}, time-division multiplexing (TDM) based RF combiners \cite{An-2014}, electrically steerable parasitic array radiator \cite{Luther-2012,Oh-2017}, and broadband variable-gain amplifiers (VGAs) \cite{Sobotta-2015}. 

\subsection{General Solution of \(\mathbf{b}_u\) and \(S\)}\label{subsec:cb}
Once the codeword \(\mathbf{q}_{k(u)}\) and the port set \(S\) are chosen, the problem becomes a single-variable optimization problem:
\begin{equation}\label{eq:pb}
P_{\mathbf{b}_u}: \max_{\mathbf{b}_u\in\mathbb{R}^{K_u}\atop \|\mathbf{b}_u\|_{2}=1}
\Re\left\{\mathbf{q}_{k(u)}^H\frac{\mathbf{H}_{u,S}\mathbf{b}_u}{\lVert \mathbf{H}_{u,S}\mathbf{b}_u\rVert_2}\right\}.
\end{equation}


\begin{lemma}\label{lem:b}
The optimal solution of \eqref{eq:pb} is given by
\begin{equation}
\mathbf{b}^{\star}_u = 
\frac{\Re\{\mathbf{H}_{u,S}^H\mathbf{H}_{u,S}\}^{-1}\Re\{\mathbf{H}_{u,S}^H\mathbf{q}_{k(u)}\}}{\left\|\Re\{\mathbf{H}_{u,S}^H\mathbf{H}_{u,S}\}^{-1}\Re\{\mathbf{H}_{u,S}^H\mathbf{q}_{k(u)}\}\right\|_2}.
\end{equation}
\end{lemma}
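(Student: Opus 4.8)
The plan is to exploit the constraint that $\mathbf{b}_u$ is real-valued to collapse the complex expression into a purely real generalized Rayleigh quotient, and then to maximize that quotient by a Cauchy--Schwarz argument. First I would rewrite the numerator: since $\mathbf{b}_u\in\mathbb{R}^{K_u}$, the quantity $\mathbf{q}_{k(u)}^H\mathbf{H}_{u,S}\mathbf{b}_u=(\mathbf{H}_{u,S}^H\mathbf{q}_{k(u)})^H\mathbf{b}_u$ pairs a complex vector with a real one, so taking the real part gives $\Re\{\mathbf{q}_{k(u)}^H\mathbf{H}_{u,S}\mathbf{b}_u\}=\Re\{\mathbf{H}_{u,S}^H\mathbf{q}_{k(u)}\}^T\mathbf{b}_u$. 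Writing $\mathbf{g}\triangleq\Re\{\mathbf{H}_{u,S}^H\mathbf{q}_{k(u)}\}$, the numerator is simply $\mathbf{g}^T\mathbf{b}_u$.

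The second step treats the denominator. Because $\mathbf{b}_u$ is real, $\lVert\mathbf{H}_{u,S}\mathbf{b}_u\rVert_2^2=\mathbf{b}_u^T(\mathbf{H}_{u,S}^H\mathbf{H}_{u,S})\mathbf{b}_u$. The Gram matrix $\mathbf{H}_{u,S}^H\mathbf{H}_{u,S}$ is Hermitian, so its imaginary part is a real skew-symmetric matrix; since any real quadratic form $\mathbf{b}_u^T\mathbf{K}\mathbf{b}_u$ in a real vector vanishes whenever $\mathbf{K}$ is skew-symmetric, only the real part survives. Hence $\lVert\mathbf{H}_{u,S}\mathbf{b}_u\rVert_2^2=\mathbf{b}_u^T\mathbf{R}\mathbf{b}_u$ with $\mathbf{R}\triangleq\Re\{\mathbf{H}_{u,S}^H\mathbf{H}_{u,S}\}$, which is real, symmetric, and (assuming $\mathbf{H}_{u,S}$ has full column rank) positive definite. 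I expect this cancellation of the skew-symmetric part to be the one genuinely lemma-specific step, as it is precisely what justifies why the final formula involves only $\Re\{\cdot\}$; it is short but easy to overlook.

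With these reductions the problem becomes $\max_{\mathbf{b}_u}\,\mathbf{g}^T\mathbf{b}_u/\sqrt{\mathbf{b}_u^T\mathbf{R}\mathbf{b}_u}$, where I would first note that the objective is invariant under positive rescaling of $\mathbf{b}_u$, so the unit-norm constraint merely fixes a scale and can be imposed at the end. To solve the direction problem, I would change variables via the positive-definite square root, setting $\mathbf{c}=\mathbf{R}^{1/2}\mathbf{b}_u$, which turns the objective into $(\mathbf{R}^{-1/2}\mathbf{g})^T\mathbf{c}/\lVert\mathbf{c}\rVert_2$. Ordinary Cauchy--Schwarz then shows the maximum is attained exactly when $\mathbf{c}$ is a positive multiple of $\mathbf{R}^{-1/2}\mathbf{g}$, i.e. $\mathbf{b}_u\propto\mathbf{R}^{-1}\mathbf{g}$, with optimal value $\sqrt{\mathbf{g}^T\mathbf{R}^{-1}\mathbf{g}}$.

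Finally I would impose $\lVert\mathbf{b}_u\rVert_2=1$ by normalizing this direction, yielding $\mathbf{b}_u^{\star}=\mathbf{R}^{-1}\mathbf{g}/\lVert\mathbf{R}^{-1}\mathbf{g}\rVert_2$, which is exactly the claimed expression after substituting back $\mathbf{R}=\Re\{\mathbf{H}_{u,S}^H\mathbf{H}_{u,S}\}$ and $\mathbf{g}=\Re\{\mathbf{H}_{u,S}^H\mathbf{q}_{k(u)}\}$. Since $\mathbf{R}^{-1}$ and $\mathbf{g}$ are both real, the resulting $\mathbf{b}_u^{\star}$ is automatically real and therefore feasible, so no Lagrangian or KKT machinery is needed beyond the equality condition in Cauchy--Schwarz.
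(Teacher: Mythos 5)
Your proof is correct, and its core reduction is the same as the paper's: you rewrite the objective as the real generalized Rayleigh quotient $\mathbf{g}^T\mathbf{b}_u/\sqrt{\mathbf{b}_u^T\mathbf{R}\mathbf{b}_u}$ with $\mathbf{g}=\Re\{\mathbf{H}_{u,S}^H\mathbf{q}_{k(u)}\}$ and $\mathbf{R}=\Re\{\mathbf{H}_{u,S}^H\mathbf{H}_{u,S}\}$, exactly the $\mathbf{a}$ and $\mathbf{G}$ of the paper (the paper asserts this identity with ``appropriately defined'' matrices, while you make explicit the one step it glosses over, namely that the skew-symmetric imaginary part of the Hermitian Gram matrix is annihilated by a real quadratic form). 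Where you genuinely diverge is the finishing step: the paper exploits scale invariance to replace the norm constraint by $\mathbf{b}_u^T\mathbf{G}\mathbf{b}_u=1$, forms the Lagrangian $\mathcal{L}(\mathbf{b}_u,\lambda)=\mathbf{a}^T\mathbf{b}_u-\lambda(\mathbf{b}_u^T\mathbf{G}\mathbf{b}_u-1)$, and reads $\mathbf{b}_u\propto\mathbf{G}^{-1}\mathbf{a}$ off the first-order condition; you instead whiten via $\mathbf{c}=\mathbf{R}^{1/2}\mathbf{b}_u$ and apply Cauchy--Schwarz. Both routes land on the same normalized solution, but yours buys two things the paper's does not: the equality case of Cauchy--Schwarz certifies the \emph{global} maximum (stationarity of the Lagrangian is only a necessary condition, and is satisfied by both $\pm\mathbf{G}^{-1}\mathbf{a}$, so it does not by itself separate the maximizer from the minimizer), and your change of variables surfaces the positive-definiteness of $\mathbf{R}$ (full column rank of $\mathbf{H}_{u,S}$) as the hypothesis both arguments implicitly need in order to invert the Gram matrix.
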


\begin{proof}
For conciseness, denote
\begin{equation}\label{eq:Ga}
\left\{\begin{aligned}
\mathbf{G} &= \Re\{\mathbf{H}_{u,S}^H\mathbf{H}_{u,S}\},\\
\mathbf{a} &= \Re\{\mathbf{H}_{u,S}^H\mathbf{q}_{k(u)}\}.
\end{aligned}\right.
\end{equation}
Since \(\mathbf{b}_u\) is a real-valued vector, the objective function in \eqref{eq:pb} can be expressed with quadratic forms:
\begin{equation}
\Re\left\{\frac{\mathbf{q}_{k(u)}^H\mathbf{H}_{u,S}\mathbf{b}_u}{\lVert \mathbf{H}_{u,S}\mathbf{b}_u\rVert_2}\right\} \equiv \frac{\mathbf{a}^T\mathbf{b}_u}{\sqrt{\mathbf{b}^T_u\mathbf{G}\mathbf{b}_u}},
\end{equation}
where ${\bf a}$ and ${\bf G}$ are appropriately defined. It can be seen that this objective function depends only on the direction of \(\mathbf{b}_u\), not its magnitude, so the constraint \(\lVert \mathbf{b}_u\rVert_{2}=1\) can be removed. In what follows, the denominator of the objective function can be constrained to have a unit value as a constraint condition, and the problem is now converted to
\begin{equation}\label{eq:pb3}
\max_{\mathbf{b}_u\in\mathbb{R}^{K_u}}\mathbf{a^T\mathbf{b}_u}~~\mbox{s.t.}~~\mathbf{b}^T_u\mathbf{G}\mathbf{b}_u = 1.
\end{equation}
Its Lagrangian function is given by
\begin{equation}
{\cal L}(\mathbf{b}_u,\lambda) = \mathbf{a}^T\mathbf{b}_u - \lambda(\mathbf{b}^T_u\mathbf{G}\mathbf{b}_u-1).
\end{equation}
Setting its first derivative to zero, we have
\begin{equation}
\mathbf{b}_u \propto \mathbf{G}^{-1}\mathbf{a}.
\end{equation}
The solution can be obtained by normalization so that
\begin{equation}
\mathbf{b}^{\star}_u= \frac{\mathbf{G}^{-1}\mathbf{a}}{\rVert\mathbf{G}^{-1}\mathbf{a}\lVert} = 
\frac{\Re\{\mathbf{H}_{u,S}^H\mathbf{H}_{u,S}\}^{-1}\Re\{\mathbf{H}_{u,S}^H\mathbf{q}_{k(u)}\}}{\left\|\Re\{\mathbf{H}_{u,S}^H\mathbf{H}_{u,S}\}^{-1}\Re\{\mathbf{H}_{u,S}^H\mathbf{q}_{k(u)}\}\right\|_2},
\end{equation}
which completes the proof.
\end{proof}

Based on the closed-form solution of \(\mathbf{b}_u\), we can then deal with the port selection strategy. For a fixed \(\mathbf{q}_{k(u)}\), once the port set \(S\) is chosen, the optimal combining vector \(\mathbf{b}^{\star}_u\) can be obtained from \eqref{lem:b}. Thus, the ideal port set selection problem is to maximize the projection energy, i.e.,
\begin{equation}\label{eq:p2}
P_2:\left\{\begin{aligned}
\max_{S} &~\Re\left\{\mathbf{q}_{k(u)}^H\frac{\mathbf{H}_{u,S}\mathbf{b}_u}{\lVert \mathbf{H}_{u,S}\mathbf{b}_u\rVert_2}\right\}\\
\mbox{s.t.} &~\mathbf{H}_{u,S} = \mathbf{H}_{u}[:,S],\\
		&~S = \{1\leq s_1 < s_2 < \dots < s_{K_u} \leq N\}.
\end{aligned}\right.
\end{equation}
Solving $P_2$ would be prohibitively complex if all \(\binom{N}{K_{u}}\) subsets are visited exhaustively, especially when \(K_u\ll N\). Hence, a more effective method with high performance is required.

In fact, we can tackle the problem similar to the way we solve for the codeword index \(K_u\) in Section \ref{sec:codeword}. Notice that if \(\mathbf{b}_u\) is a complex-valued vector, then for each selected port set \(S\), \(\mathbf{b}_u\) will adaptively adjust itself based on \(\mathbf{H}_{u,S}\), ensuring that the value of the objective function equals the projection of \(\mathbf{q}_{k(u)}^H\) onto \(\operatorname{span}\{\mathbf{H}_{u,S}\}\). Specifically, denote the orthogonal projector of subspace \(\operatorname{span}\{\mathbf{H}_{u,S}\}\):
\begin{equation}
	\mathbf{P}_{S} = \mathbf{H}_{u,S}\big(\mathbf{H}_{u,S}^H\mathbf{H}_{u,S}\big)^{-1}\mathbf{H}_{u,S}^H.
\end{equation}
Then we consider the following problem:
\begin{equation}\label{eq:p3}
P_3:\left\{\begin{aligned}
\max_{S} &~\lVert\mathbf{P}_{S}\mathbf{q}_{k(u)}\rVert_2^2\\
\mbox{s.t.} &~\mathbf{H}_{u,S} = \mathbf{H}_{u}[:,S],\\
		&~S = \{1\leq s_1 < s_2 < \dots < s_{K_u} \leq N\},
\end{aligned}\right.
\end{equation}
which can give a suboptimal solution to \(P_2\) if \(\mathbf{b}_u\) is a complex-valued vector. However, \(\mathbf{b}_u\) is required to be a real-valued vector. In this case, the relevant geometry should be defined by the real inner product
\begin{equation}
\langle \mathbf{x},\mathbf{y}\rangle_{\mathbb{R}} = \Re\{\mathbf{x}^H\mathbf{y}\}.
\end{equation}
That is, the orthogonal projection of \( \mathbf{q}_{k(u)} \) onto \( \operatorname{span}\{\mathbf{H}_{u,S}\} \) equals the real-coefficient least-squares solution:
\begin{equation}
\mathbf{\Pi}_{S}^{(\mathbb{R})}\mathbf{q}_{k(u)}
=\mathbf{H}_{u,S}\big(\Re\{\mathbf{H}_{u,S}^H\mathbf{H}_{u,S}\}\big)^{-1}\Re\{\mathbf{H}_{u,S}^H\mathbf{q}_{k(u)}\}.
\end{equation}
Therefore, instead of maximizing \( \|\mathbf{P}_S\mathbf{q}_{k(u)}\|_2^2 \) with the complex projector, we formulate the following problem: 
\begin{equation}\label{eq:p4}
P_4:\left\{\begin{aligned}
\max_{S} &~\left\|\mathbf{\Pi}_{S}^{(\mathbb{R})}\mathbf{q}_{k(u)}\right\|_2^2\\
\mbox{s.t.} &~\mathbf{H}_{u,S} = \mathbf{H}_{u}[:,S],\\
		&~S = \{1\leq s_1 < s_2 < \dots < s_{K_u} \leq N\}.
\end{aligned}\right.
\end{equation}

A simple, and widely used approach to solve orthogonal projection problem like \(P_4\) is a greedy selection by orthogonal matching pursuit (OMP), which repeatedly selects the column most correlated with the current residual and then orthogonally projects \(\mathbf{q}_{k(u)}\) onto the span of the selected columns.

Let \(\mathbf{r}^{(t)}\) denote the residual after \(t\) selections, initialized by \(\mathbf{r}^{(0)}=\mathbf{q}_{k(u)}\). At iteration \(t+1\), for the current residual \(\mathbf{r}^{(t)}\), OMP considers adding a column \(\boldsymbol{h}_{u,i}\) into \(\mathbf{H}_{u,S}\). If the angle between \(\boldsymbol{h}_{u,i}\) and \(\mathbf{r}^{(t)}\) is smaller within the real inner product space, then \(\boldsymbol{h}_{u,i}\) can more effectively ``eliminate'' a portion of the residual. In particular, OMP chooses
\begin{equation}\label{eq:selection_rule}
i^{\star} \in \arg\max_{i\notin S^{(t)}} \frac{\left|\Re\{\boldsymbol{h}_{u,i}^H\mathbf{r}^{(t)}\}\right|}{\lVert\boldsymbol{h}_{u,i}\rVert_{2}},
\end{equation}
updates the chosen ports set \(S^{(t+1)}=S^{(t)}\cup\{i^{\star}\}\), and refreshes the residual by the orthogonal projection
\begin{equation}\label{eq:residual_update}
	\mathbf{r}^{(t+1)}=\mathbf{q}_{k(u)}-\mathbf{\Pi}^{(\mathbb{R})}_{S^{(t+1)}}\mathbf{q}_{k(u)}.
\end{equation}
The normalization in \eqref{eq:selection_rule} prevents bias towards large-norm columns. This strategy can iteratively select the vector that can minimize the residual the most and add it to the subspace. The detailed algorithm is presented in Algorithm \ref{alg:OMP}.

\begin{algorithm}[t]
\caption{OMP based port selection algorithm}\label{alg:OMP}
\begin{algorithmic}[1]
		\State \textbf{Input:} \(\mathbf{H}_u=\big[\boldsymbol{h}_{u,1},\dots,\boldsymbol{h}_{u,N}\big], \boldsymbol{q}_{k(u)}\), \(K_u\)
		\State \textbf{Pre-normalize:} For each \(i\), set \(\tilde{\boldsymbol{h}}_{u,i}=\boldsymbol{h}_{u,i}/\sqrt{\Re\{\boldsymbol{h}_{u,i}^H\boldsymbol{h}_{u,i}\}}\). Let \(\tilde{\mathbf{H}}_u=[\tilde{\boldsymbol{h}}_{u,1},\ldots,\tilde{\boldsymbol{h}}_{u,N}]\).
		\State Initialize \(S^{(0)}\leftarrow\varnothing\), \(\mathbf{r}^{(0)}\leftarrow \mathbf{q}_{K(u)}\).
		\For{\(t=1,\ldots,K_u\)}
		\State \(c_{i}\leftarrow \big|\Re\{\tilde{\mathbf{h}}_{u,i}^H\mathbf{r}^{(t-1)}\}\big|\) for all \(i\notin S^{(t-1)}\).
		\State Select \(i^{\star}\in\arg\max_{i\notin S^{(t-1)}} c_{i}\).
		\State Update \(S^{(t)}\leftarrow S^{(t-1)}\cup\{i^{\star}\}\) and \(\mathbf{H}_{u,S} = \mathbf{H}_{u}[:,S^{(t)}]\).
		\State Compute the real-coefficient projector \(\mathbf{\Pi}_{S^{(t)}}^{(\mathbb{R})}\mathbf{q}_{k(u)}
		=
		\mathbf{H}_{u,S^{(t)}}\big(\Re\{\mathbf{H}_{u,S^{(t)}}^H\mathbf{H}_{u,S^{(t)}}\}\big)^{-1}\Re\{\mathbf{H}_{u,S^{(t)}}^H\mathbf{q}_{k(u)}\}\).
		\State Update residual: \(\mathbf{r}^{(t)}\leftarrow \mathbf{q}_{k(u)}-\mathbf{\Pi}_{S^{(t)}}^{(\mathbb{R})}\mathbf{q}_{k(u)}\).
		\EndFor
		\State \textbf{Output:} Index set \(S^{(K_u)}\).
	\end{algorithmic}
\end{algorithm}

If the columns of \(\mathbf{H}_u\) are mutually orthogonal, OMP attains the optimum, with complexity approximately at \(O(MNK_u)\), which is much lower than the exhaustive method. 

\vspace{-2mm}
\subsection{No Combining Case with Fixed \(\mathbf{b}_u = \frac{1}{\sqrt{K_u}}\mathbf{1}\)}\label{subsec:ncb}
In some cases, the hardware cost of the FAS at UE needs to be lower and the passive attenuators at the ports for combining are not possible. Then we have \(\mathbf{b}_u = \frac{1}{\sqrt{K_u}}\mathbf{1}\) which is fixed. In this case, the port selection strategy should be adjusted. 

If the OMP-based algorithm is used, then since \(\mathbf{b}_u\) cannot adapt to each \(S\), it may introduce errors. Additionally, if \(\mathbf{b}_u\) is fixed, Problem \(P_2\) can be simplified to a more tractable form. We will study the selection strategy for \(S\) here. 

When \(\mathbf{b}_u = \frac{1}{\sqrt{K_u}}\mathbf{1}\), the objective function of Problem \(P_2\) can be simplified to
\begin{align}
\Re\left\{\mathbf{q}_{k(u)}^H\frac{\mathbf{H}_{u,S}\mathbf{b}}{\lVert \mathbf{H}_{u,S}\mathbf{b}\rVert_2}\right\} &= \Re\left\{\frac{\mathbf{q}_{k(u)}^H\sum_{n\in S}\boldsymbol{h}_{u,n}}{\sum_{p,q\in S}\boldsymbol{h}_{u,p}^H\boldsymbol{h}_{u,q}}\right\}\notag\\
&= \frac{\lvert\sum_{n\in S} g_n\rvert}{\sqrt{\sum_{p,q\in S} r_{pq}}},
\end{align}
where \(g_n = \mathbf{q}_{k(u)}^{H}\boldsymbol{h}_{u,n}\in\mathbb{C}\), and \(r_{pq} = \Re\{\boldsymbol{h}_{u,p}^{H}\boldsymbol{h}_{u,q}\}\in\mathbb{R}\).

Next, we consider using a greedy algorithm, i.e, each time, select a vector \(\boldsymbol{h}_{u,t}\) from the remaining candidate column vectors such that the current score, i.e., the objective function, is maximized. For conciseness, we define 
\begin{equation}
\left\{\begin{aligned}
G^{(t)} &\triangleq \sum_{n\in S^{(t)}} g_{n},\\
R^{(t)} &\triangleq \sum_{p,q\in S^{(t)}} r_{pq},
\end{aligned}\right.
\end{equation}
in which \(S^{(t)}\) denotes the index set after \(t\) selections. Assuming the \(t\)-th selection chooses the port \(\boldsymbol{h}_{u,l(t)}\), then the iterative forms of \(G^{(t)}\) and \(R^{(t)}\) can be expressed as
\begin{equation}
\left\{\begin{aligned}
G^{(t)}& =  G^{(t-1)} + g_{l(t)},\\
R^{(t)}& = R^{(t-1)} + 2\sum_{p\in S^{(t-1)}} r_{p{l(t)}} + r_{{l(t)}{l(t)}}.
\end{aligned}\right.
\end{equation}
Let \(\Delta^{(t)}\) be the greedy score after \(t\) selections, given by
\begin{equation}\label{eq:score}
\Delta^{(t)} = \frac{|G^{(t-1)} + g_{l(t)}|^{2}}{R^{(t-1)} + 2\sum_{p\in S^{(t-1)}} r_{p{l(t)}} + r_{{l(t)}{l(t)}}}.
\end{equation}
At each selection, pick \({l(t)}^{\star}=\arg\max_{{l(t)}\notin S^{(t-1)}}\Delta^{(t)}\). Also, since combining is not applied, there may be cases in which activating any additional port cannot improve performance. In other words, when \(t\geq 2\), \({l(t)}^{\star}\) needs to satisfy
\begin{equation}
\begin{aligned}
\frac{|G^{(t-1)} + g_{{l(t)}^{\star}}|^{2}}{R^{(t-1)} + 2\sum_{p\in S^{(t-1)}} r_{p{{l(t)}^{\star}}} + r_{{{l(t)}^{\star}}{{l(t)}^{\star}}}} > 
\frac{|G^{(t-1)}|^2}{R^{(t-1)}}.
\end{aligned}
\end{equation}
Otherwise, the selection process will terminate. The detailed algorithm is presented in Algorithm \ref{alg:NCC}.

\begin{algorithm}[t]
\caption{Greedy Port Selection for No Combining Case}\label{alg:NCC}
\begin{algorithmic}[1]
		\State \textbf{Input:} \(\mathbf{H}_u=\big[\boldsymbol{h}_{u,1},\dots,\boldsymbol{h}_{u,N}\big], \mathbf{q}_{k(u)}\).
		\State \textbf{Precompute} \(g_n = \mathbf{q}_{k(u)}^{H}\boldsymbol{h}_{u,n}\in\mathbb{C}\), and \(r_{pq} = \Re\{\boldsymbol{h}_{u,p}^{H}\boldsymbol{h}_{u,q}\}\) for all \(1 \leq n,p,q \leq N\).
		\State \(S^{(0)}\leftarrow\varnothing\), \(G^{(0)}\leftarrow 0\), \(R^{(0)}\leftarrow 0\), \(\Delta^{(0)}\leftarrow 0\).
		\While{\(|S|<K_{u}\)}
		\State \(t = t + 1\).
		\ForAll{\(l(t) \notin  S^{(t)}\)}
		\State \(G^{(t)} =  G^{(t-1)} + g_{l(t)}\).
		\State \(R^{(t)} = R^{(t-1)} + 2\sum_{p\in S^{(t-1)}} r_{p{l(t)}} + r_{{l(t)}{l(t)}}\).
		\State \(\Delta^{(t)} = \frac{G^{(t)}}{R^{(t)}}\).
		\EndFor
		\State Select \({l(t)}^{\star}=\arg\max_{{l(t)}\notin S^{(t-1)}}\Delta^{(t)}\), 
		\State \(\Delta^{(t)\star}=\max_{{l(t)}\notin S^{(t-1)}}\Delta^{(t)}\).
		\If{\(\Delta^{(t)\star} \leq \Delta^{(t-1)}\)} 
		\State \textbf{break} 
		\EndIf
		\State Update \(S^{(t)}\leftarrow S^{(t-1)}\cup\{l(t)^{\star}\}\), \(G^{(t)}\leftarrow G^{(t-1)} + g_{l(t)^{\star}}\), \(R^{(t)}\leftarrow R^{(t-1)} + 2\sum_{p\in S^{(t-1)}} r_{p{l(t)^{\star}}} + r_{{l(t)}^{\star}{l(t)^{\star}}}\), \(\Delta^{(t)}\leftarrow \Delta^{(t)\star}\).
		\EndWhile
		\State \textbf{Output:} Index set \(S^{(t)}\).
	\end{algorithmic}
\end{algorithm}

The computational complexity of this algorithm lies in the pre-computation, which requires \(O(MN+N^2)\) operations, while the complexity of the iterative process is \(O(NK_u)\). As a result, the overall complexity is \(O(N(M+N+K_u))\).

\vspace{-2mm}
\section{Codeword Collisions}\label{sec:collision}
To accommodate a large number of UEs without centralized joint codeword assignment, codeword collision is inevitable, i.e., the event that multiple UEs select the same codeword, and the BS cannot distinguish these two UEs' messages. A simple way to deal with this problem is to increase the number of antennas \(M\) to provide enough codewords. Assume \(U\) UEs select among \(M\) codewords uniformly, the probability that all \(U\) UEs pick distinct codewords is given by
\begin{equation}
P_{\text{unique}} = \frac{M(M-1)\cdots(M-U+1)}{M^{U}}.
\end{equation}
When \(M\gg U\), the probability can be asymptotically approximated as
\begin{equation}
P_{\text{unique}} \approx \exp\left(-\frac{U(U-1)}{2M}\right).
\end{equation}  

This exponential function shows that enlarging \(M\) when \(M\gg U\) quickly suppresses collisions. However, when \(U\) is very large, setting \(M\gg U\) could become a burden on the BS. It is therefore important to design alternative schemes. 

To tackle collision, we fine the ALOHA family of random-access protocols relevant. ALOHA was originally proposed for decentralized packet transmission in shared wireless channels \cite{ALOHA}. In ALOHA, users transmit packets in an uncoordinated manner. To improve efficiency, a reservation-based variant, known as reservation ALOHA (R-ALOHA), can be used \cite{RALOHA1,RALOHA2}. In R-ALOHA, each user first sends a short reservation message to the BS to request access, after which the BS schedules collision-free transmissions within the data phase. This scheme significantly reduces channel contention and improves throughput compared with conventional slotted ALOHA \cite{RALOHA3}.

For this reason, the reservation phase is used for codeword detection. In particular, each UE sends a very short signal, and the BS determines whether a collision occurs for each codeword by correlating the received signal with the corresponding codeword vector, i.e.,
\begin{equation}
r_k = \mathbf{q}_k^{H}\mathbf{y}_{\mathrm{reservation}},
\end{equation}
where \(\mathbf{y}_{\mathrm{reservation}}\) denotes the received signal vector collected within the reservation phase. If a single UE transmits using codeword \(\mathbf{q}_k\), the correlation magnitude \(\lvert r_k\rvert\) will be much higher than the noise level. On the contrary, if more than one UEs select \(\mathbf{q}_k\), then the received signal becomes a superposition of multiple channels, and the BS can detect the collision by measuring the energy \(E_k = \lvert r_k\rvert^2\) and comparing it with a predefined threshold derived from the noise power and single-user statistical powers. This projection-based method allows the BS to detect codeword collisions without explicit CSI. 

As a result, the BS can identify which codewords have been selected without performing complicated collision detection or multiuser separation. If multiple UEs choose the same codeword, a collision occurs. Then the BS broadcasts a short feedback message to inform the collided UEs. Depending on the system design, several strategies are possible: 
\begin{itemize}
\item[$(1)$] The BS simply instructs the UEs to defer transmission and randomly reattempt in a later slot.
\item[$(2)$] The BS chooses a UE to keep the codeword (based on the power or just chooses arbitrarily), and informs other UEs of the currently unoccupied codeword indices \(\mathbf{Q}_{\text{remain}}\) and let them autonomously select new ones. In this case, each remaining collided UE can follow the rule \eqref{eq:kus1} to select an alternative codeword. This strategy allows each UE to complete communication within the current time slot without postponement, while performance does not degrade significantly. However, if \(M\) is not sufficiently large, the remaining users still have a relatively high probability of selecting the same codeword, and with no further opportunity for reselection, a collision will inevitably occur. In such cases, the UEs involved in the collision are considered to have failed transmission. The BS will feed back a transmission failure indicator to them and ask for a retransmission in the next frame.
\item[$(3)$] The BS directly assigns specific alternative codewords from the unoccupied set to those UEs, thereby avoiding further collisions. However, since the BS has no CSI, it can only assign codewords to UEs randomly, which will inevitably lead to performance degradation.
\end{itemize}

Clearly, there exists a fundamental trade-off between signaling overhead and collision avoidance performance. The first strategy introduces almost no extra signal processing cost but suffers from possible retransmission delay. The second strategy improves slot efficiency at the cost of a small probability of re-collision, while the third strategy eliminates collisions completely but requires additional BS-side feedback to reassign codewords and suffers from the performance loss caused by arbitrary assignment of codewords. In practice, the optimal choice depends on system load and latency requirements. For instance, in massive access scenarios, the second strategy provides a near-optimal balance between complexity and performance. The paper will provide a comprehensive comparison of these three methods in the simulations section.

Recent studies have explored new directions to reduce codeword collisions in large-scale random access. One line of work adopts unsourced random access (URA), where all UEs share a common codebook and transmit short messages without explicit identifiers, while the BS jointly decodes the set of transmitted codewords from their superimposed observations \cite{Schedule1}. Recently in \cite{Schedule2} and \cite{Zhang-2025wcl}, URA was further combined with FAS and MIMO, which might offer an excellent solution to the codeword collision problem by providing additional spatial degrees of freedom for separating colliding UEs. Due to space limitations, only three basic schemes are introduced in this paper, while exploring more advanced like URA remains a promising direction for future research.

\vspace{-2mm}
\section{Simulation Results}\label{sec:sim}
Here, we present simulation results to evaluate the effectiveness of the proposed CPSC schemes. The simulations consider an uplink communication scenario as illustrated in Fig.~\ref{fig:model}. The main metric is the average data rate, defined as
\begin{equation}
\bar{R}_u = \mathbb{E}\{\log\left(1+{\rm SINR}_u\right)\},
\end{equation}
where the SINR is given by
\begin{equation}
{\rm SINR}_u =\frac{P_u\,\big|\mathbf{q}_{k(u)}^{H}\boldsymbol{h}_u\big|^{2}}
{\sum_{v=1\atop v\neq u}^{U} P_v\,\big|\mathbf{q}_{k(u)}^{H}\boldsymbol{h}_v\big|^{2} + \sigma^{2}}.
\end{equation}
Average rate reflects the system throughput under random user access and potential codeword collisions. The channels are modeled with Rice fading channels. The Rice factor is set as \(L = 0.1\). The UEs are randomly distributed on a $100{\rm m}\times 100{\rm m}$ plane, located $200{\rm m}$ away from the BS. The size of FAS is set as \(4\lambda\times 4\lambda\), in which the carrier wavelength \(\lambda\) is \(1{\rm cm}\), i.e., the carrier frequency is \(30\) GHz. The received signal-to-noise ratio (SNR) is fixed as \(10\) dB. The large-scale path loss is neglected. All simulation results are obtained with \(10^4\) Monte-Carlo independent trials. Unless otherwise specified, collisions are handled by a random deferral approach, where the colliding UEs retransmit in subsequent slots, and their instantaneous rates are counted as zero in the current slot. Also, the CPSC-FAMA scheme jointly optimizes the codeword index \(k(u)\), the port set \(S\), and the combining vector \(\mathbf{b}_u\) in sequence, while the fixed-antenna benchmark only optimizes \(k(u)\) and \(\mathbf{b}_u\). Additionally, for convenience, the number of activated ports \(K_u\) for all UEs is set to be equal, i.e., \(K_1=\dots=K_U=K\). In the CPSC-FAMA scheme, the codeword vector \(\mathbf{q}_{K(u)}\) is obtained through SVD, where the number of retained singular vectors \(t\) is fixed to \(K_u\) unless otherwise specified. The results provide insights into how port selection, combining design, and collision control jointly affect the system performance. 

Fig.~\ref{fig:RateN} illustrates the average achievable rate versus the number of available ports \(N\) for different transmission schemes. The number of UEs is fixed as \(U=16\). It can be observed from the figure that the proposed CPSC-FAMA scheme consistently outperforms the conventional fixed-antenna benchmark, owing to its ability to dynamically optimize the port-selection set to adjust the instantaneous channel vector to match the candidate codewords. In addition, the average rate increases with \(N\) as expected, as more ports provide additional spatial degrees of freedom and enhance the channel diversity. In contrast, the fixed-antenna scheme can only adjust the codeword and combining vector to match the instantaneous channel situation, leading to a noticeable performance gap. This improvement remains stable across different values of \(M\) and \(K\), confirming the robustness of the proposed CPSC-FAMA design in exploiting port-level spatial diversity. Moreover, the results indicate that increasing the number of BS antennas \(M\) expands the codebook size and thus improves the selection flexibility, while enlarging the number of active ports (or antennas) \(K\) provides more degrees of freedom for optimizing the combining vector. Both factors contribute to performance enhancement under the proposed and benchmark schemes.

\begin{figure}
\centering
\includegraphics[width=.95\linewidth]{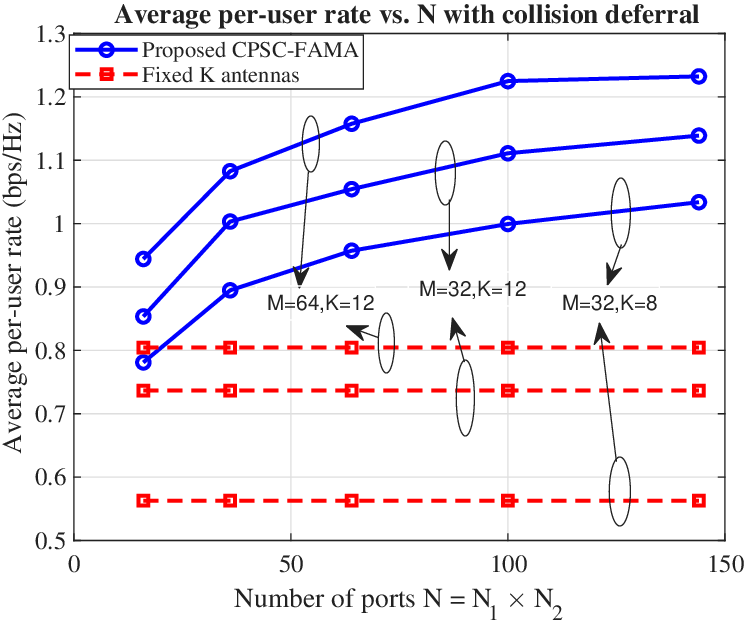}
\caption{Average rate per user vs. the number of ports \(N\), when the number of UEs is fixed as \(U=16\).}\label{fig:RateN}
\vspace{-2mm}
\end{figure}

In Figs.~\ref{fig:RateU} and~\ref{fig:sumRateU}, we illustrate the average per-user rate and the total sum rate versus the number of UEs \(U\), respectively. The number of activated ports is \(K=16\). The per-user average rate decreases with increasing \(U\) due to stronger inter-user interference and a higher probability of codeword collisions. The proposed CPSC-FAMA scheme consistently outperforms the fixed-antenna benchmark in both metrics, benefiting from the additional spatial diversity via adaptive port selection. Moreover, increasing the number of BS antennas \(M\) or the number of available ports \(N\) leads to further improvements for both schemes, as a larger \(M\) expands the available codebook and a larger \(N\) enhances the port-selection flexibility. When considering the sum-rate performance, it can be observed that increasing the number of UEs does not bring any overall gain for the fixed-antenna scheme, as the growing interference and collision probability quickly offset the benefit of multiuser diversity. In contrast, the proposed CPSC-FAMA scheme can still exploit the spatial flexibility offered by the FAS structure to achieve additional throughput gain when \(U\) is small to moderate. Although the sum rate of the proposed scheme also tends to saturate as \(U\) further increases, equipping either more BS antennas or available FAS ports effectively postpones this saturation. It can be expected that as long as \(N\) is sufficiently large and the number of BS antennas \(M\) exceeds the number of UEs, the proposed CPSC-FAMA approach is capable of supporting a massive number of UEs, demonstrating its strong potential for large-scale uplink multi-access systems.

\begin{figure}
\centering
\includegraphics[width=.95\linewidth]{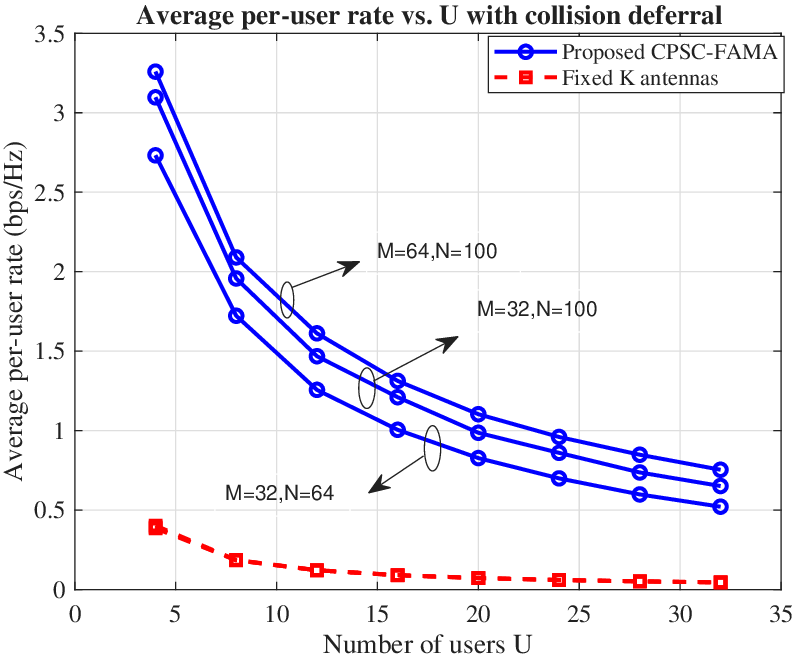}
\caption{Average rate per user vs. the number of UEs \(U\), when the number of activated ports is fixed as \(K=16\).}\label{fig:RateU}
\vspace{-2mm}
\end{figure}

\begin{figure}
\centering
\includegraphics[width=.95\linewidth]{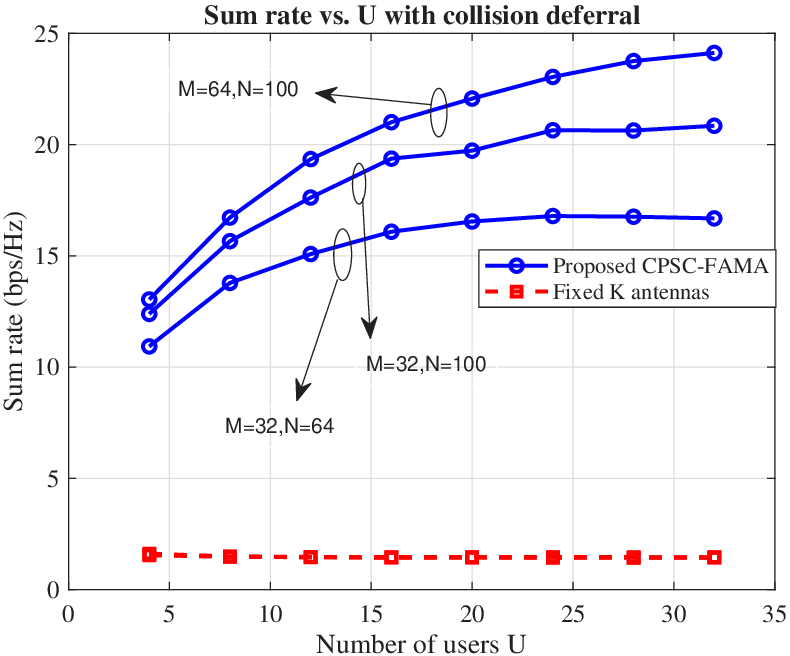}
\caption{Sum rate vs. the number of UEs \(U\), when the number of activated ports is fixed as \(K=16\).}\label{fig:sumRateU}
\vspace{-2mm}
\end{figure}

Fig.~\ref{fig:RateM} illustrates the average per-user rate versus the number of BS antennas \(M\) under different user and port configurations. The number of ports is fixed as \(N=100\). In addition to showing that the proposed CPSA-FAMA scheme consistently outperforms the fixed-antenna benchmark, Fig.~\ref{fig:RateM} also reveals that increasing \(M\) brings only a limited improvement in performance. The performance improvement brought by increasing \(M\) can be attributed to two main factors: (i) a larger number of available codewords, which provides each UE with more selection flexibility, and (ii) a lower probability of codeword collisions. As shown in the figure, when \(M\) is relatively small, increasing \(M\) yields a significant gain because the collision probability rapidly decreases with the expanded codebook. Nonetheless, as \(M\) continues to grow, the collision probability becomes very low, as illustrated in Fig.~\ref{fig:CRM}, and the remaining gain mainly comes from the availability of more codewords for selection. As a matter of fact, once the number of codewords far exceeds the number of UEs, each UE can already find a near-optimal codeword \(\mathbf{q}_{K(u)}\) that lies close to the column space of its channel matrix \(\mathbf{H}_{u}\), so further enlarging \(M\) brings little improvement. Therefore, it is sufficient for the BS to have a moderate number of antennas slightly larger than the number of UEs, rather than an excessively large array. This property stems from the CSI-free nature of the proposed CPSC-FAMA scheme, where the BS does not fully exploit the spatial degrees of freedom but instead delegates part of the adaptation burden to the UEs. Consequently, the proposed design greatly reduces the computational and signaling load at the BS, making it particularly suitable for large-scale user-access scenarios.

\begin{figure}
\centering
\includegraphics[width=.95\linewidth]{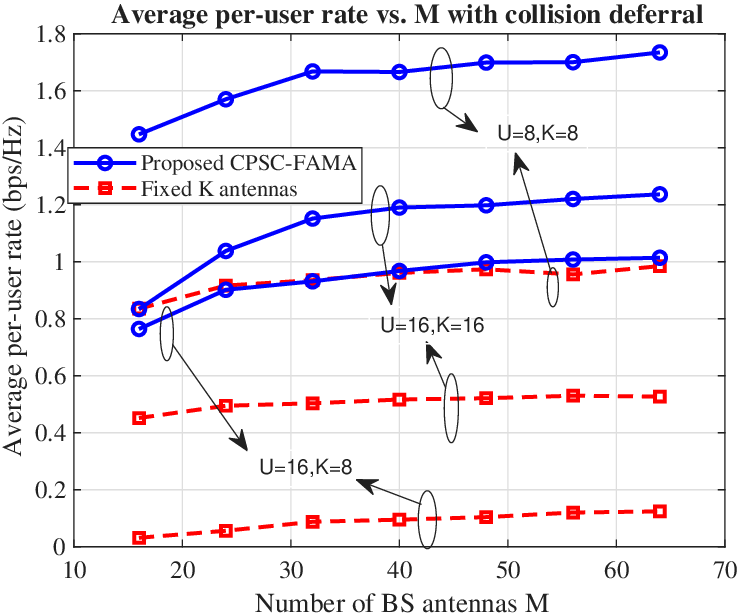}
\caption{Average rate per user vs. the number of BS antennas \(M\), when the number of ports is fixed as \(N=100\).}\label{fig:RateM}
\vspace{-2mm}
\end{figure}

Fig.~\ref{fig:CRM} shows the collision probability versus the number of BS antennas \(M\). As already discussed for Fig.~\ref{fig:RateM}, the collision rate decreases rapidly when \(M\) is small, because a larger number of BS antennas corresponds to a larger codebook and therefore a lower chance that multiple UEs select the same codeword. When \(M\) continues to increase, the collision probability quickly approaches a small value and remains almost unchanged. It can also be observed that varying the number of active ports \(K\) does not affect the collision probability, since collisions only depend on the number of users \(U\) and the total number of available codewords determined by \(M\).

\begin{figure}
\centering
\includegraphics[width=.95\linewidth]{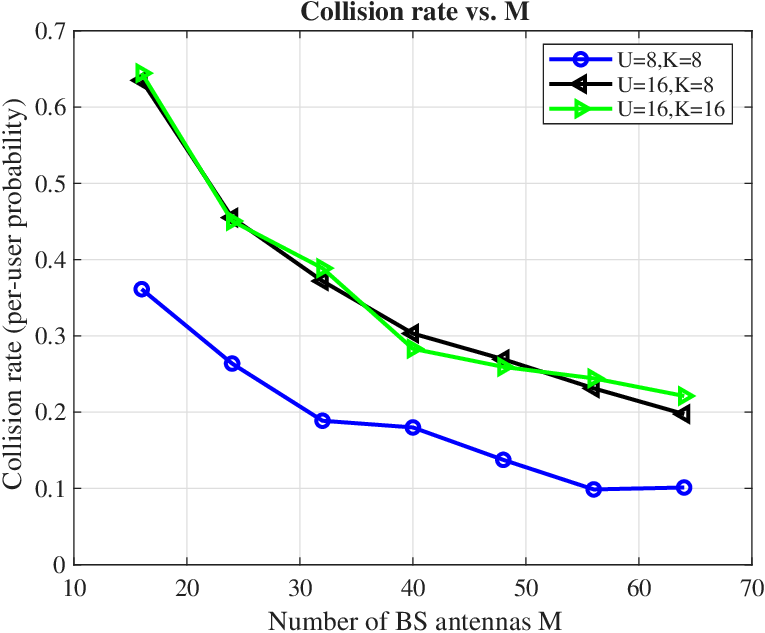}
\caption{Codeword collision rate vs. the number of BS antennas \(M\).}\label{fig:CRM}
\vspace{-2mm}
\end{figure}

Fig.~\ref{fig:Ratet} illustrates the average per-user rate versus the number of retained left singular vectors \(t\) in the SVD-based codeword optimization. The number of BS antennas and activated ports are fixed as \(M=64\) and \(K=8\), respectively. It can be observed that the impact of \(t\) on performance is generally minor. Even in the extreme case of \(t=1\), the performance degradation compared with a larger \(t\) is small, while the computational complexity can be significantly reduced. Moreover, increasing \(t\) does not always lead to monotonic improvement, the curves exhibit slight fluctuations as \(t\) grows, which may result from the inherent instability of low-rank channel estimation. Therefore, \(t\) should not be chosen to be too large. Instead, it should be selected according to practical trade-offs between performance and complexity.

\begin{figure}
\centering
\includegraphics[width=.95\linewidth]{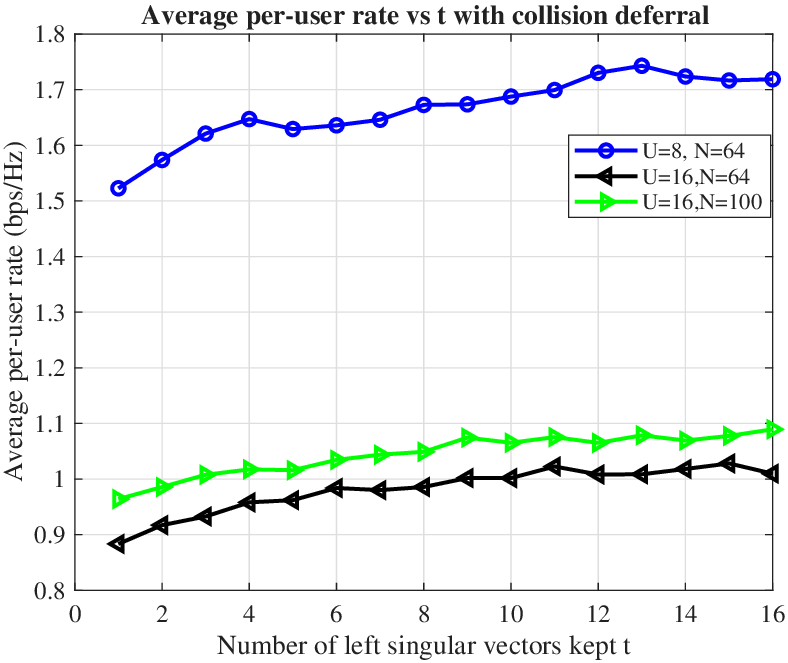}
\caption{Average rate per user vs. the number of left singular vectors \(t\), the number of BS antennas and activated ports are fixed as \(M=64\) and \(K=8\), respectively.}\label{fig:Ratet}
\vspace{-2mm}
\end{figure}

Fig.~\ref{fig:RateNbq} shows the average per-user rate versus the number of candidate ports \(N\) for the combining version of CPSC-FAMA in Section \ref{subsec:cb}, the non-combining version in Section \ref{subsec:ncb} of the proposed CPSA-FAMA scheme, and the arbitrarily assigning codewords strategy from Remark \ref{rem:nq}, along with the fixed-antenna benchmark. The number of BS antennas is fixed as \(M=64\). As expected, the non-combining scheme exhibits lower performance than the combining one, since it does not attenuate the different FAS ports. Nevertheless, it still achieves a substantial gain over the fixed-antenna case, highlighting the dominant contribution of port selection. Moreover, as \(N\) increases, the performance gap between the combining and non-combining schemes gradually narrows and eventually becomes negligible. This occurs because, when \(N\) is sufficiently large, the spatial degrees of freedom provided by the FAS structure already exceed the additional gain obtainable through combining. In such cases, the non-combining design can be adopted to save hardware resources without noticeable performance loss. As for the other scheme with arbitrarily assigning codewords at the BS, unlike the non-combining scheme, the performance gap between the optimized and randomly assigned codewords remains noticeable even when \(N\) is large. This indicates that random codeword allocation cannot fully exploit the spatial diversity provided by the FAS, while optimizing the codeword indices allows a better matching between the transmit and receive directions. That said, even with arbitrary codeword allocation, CPSA-FAMA still achieves a significantly higher rate than the fixed-antenna benchmark, due to the gain brought by the FAS-based port selection. Unlike the combining vector \(\mathbf{b}_u\), optimizing the codeword index does not require any additional hardware resources and can be easily implemented. Hence, when computational capability is sufficient, it is recommended that UEs perform optimized codeword selection to obtain a higher performance. These observations also confirm that the key advantage of the proposed CPSA-FAMA scheme lies in the FAS-based port selection process predominately.

\begin{figure}
\centering
\includegraphics[width=.95\linewidth]{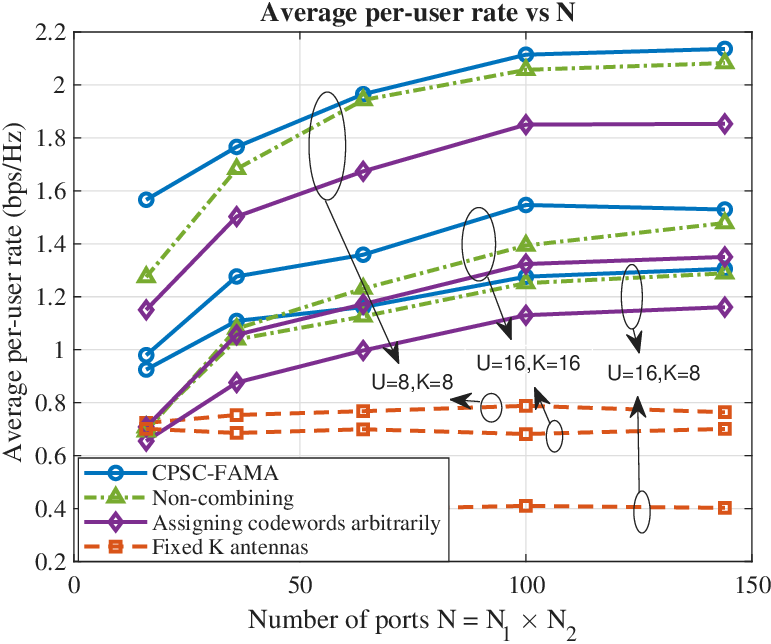}
\caption{Average rate per user vs. the number of ports \(N\) with four different schemes, including proposed CPSC-FAMA, non-combining with fixed \(\mathbf{b}_u\), BS assigning codewords arbitrarily, and conventional fixed antenna. The number of BS antennas is fixed as \(M=64\).}\label{fig:RateNbq}
\vspace{-2mm}
\end{figure}

Fig.~\ref{fig:Exh} compares the proposed CPSC-FAMA scheme with the exhaustive port-selection method and the fixed-antenna baseline. As shown, CPSC-FAMA with the OMP algorithm achieves performance very close to the exhaustive search. However, the exhaustive approach requires evaluating all possible port combinations, resulting in an explosion in complexity that makes it impractical for large-scale systems. Thus, the proposed low-complexity algorithm strikes an excellent trade-off between performance and computational efficiency.

\begin{figure}[t]
\centering
\includegraphics[width=.95\linewidth]{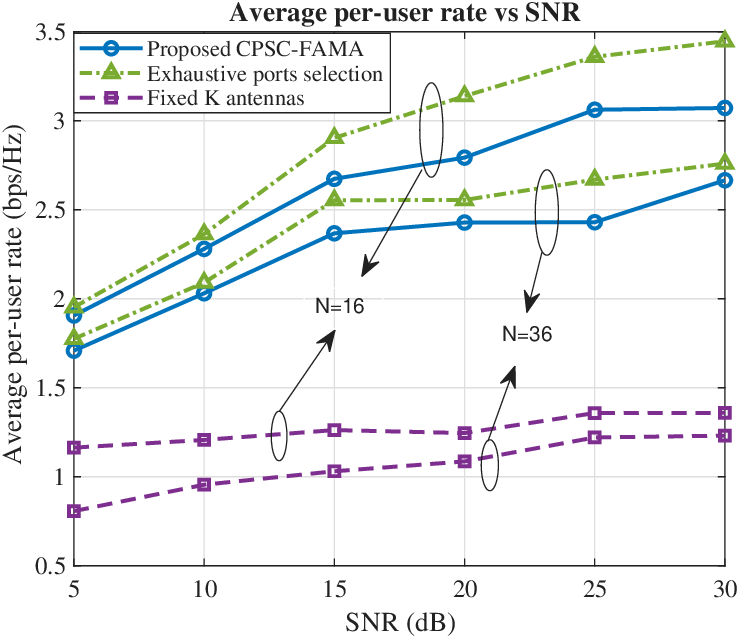}
\caption{Average rate per user vs. SNR with CPSC and exhaustive ports selection methods.}\label{fig:Exh}
\vspace{-2mm}
\end{figure}
\begin{figure}[t]
\centering
\includegraphics[width=.95\linewidth]{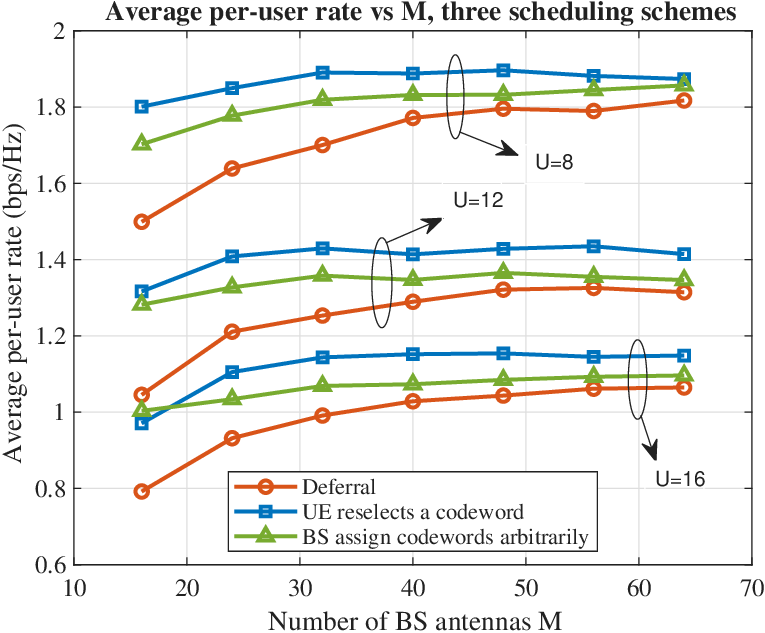}
\caption{Average rate per user vs. the number of BS antennas \(M\) with three different scheduling schemes.}\label{fig:Scheduling}
\vspace{-2mm}
\end{figure}

In Fig.~\ref{fig:Scheduling}, we illustrate the average per-user rate versus the number of BS antennas \(M\) under three different collision-handling schemes: random deferral, BS-side codeword reassignment, and UE-side codeword reselection. Amongst them, the deferral scheme has the worst performance as the colliding UEs waste transmission opportunities in the time slot. The BS-side reassignment performs better but not the best, because the BS has no CSI and can only reallocate codewords randomly, which has been discussed in Fig.~\ref{fig:RateNbq}. Thus, although this approach completely eliminates further collisions, it still causes a noticeable performance loss. The UE-side reselection scheme achieves the highest rate, as each UE can independently choose a more suitable codeword based on its local channel condition. Nevertheless, the performance gap among the three schemes narrows as \(M\) increases because a larger number of BS antennas provides more available codewords and substantially reduces the collision probability. These results indicate that while the reselection scheme provides the best performance, BS-assisted scheduling already offers a favorable trade-off between efficiency and complexity, which could be adopted according to the practical communication scenario.

\section{Conclusion}\label{sec:conclu}
This paper presented a novel CPSC-FAMA framework for multiuser uplink communication in the absence of CSI at the BS. The proposed scheme allows each UE to independently select codewords and activate ports based on local CSI, while the BS separates user signals through codebook-guided projection without CSI. Three lightweight scheduling strategies were designed to flexibly handle codeword collisions. Simulation results demonstrated that the proposed framework achieves a much higher average rate compared to fixed-antenna systems, while significantly reducing BS-side complexity. The findings highlight the strong potential of FAS for supporting massive and low-complexity access in future 6G networks.

\bibliographystyle{IEEEtran}

\end{document}